\newtheorem{definition}{Definition}
\newtheorem{notation}{Notation}
\newtheorem{lemma}{Lemma}
\newtheorem{theorem}{Theorem}
\newtheorem{corollary}{Corollary}
\newtheorem{proposition}{Proposition}
\author{K\'evin Perrot\addressmark{1}\thanks{\email{kevin.perrot@ens-lyon.fr}}
  \and Thi Ha Duong Phan\addressmark{2}\thanks{\email{phanhaduong@math.ac.vn}}
  \and Trung Van Pham\addressmark{2}\thanks{\email{pvtrung@math.ac.vn}}}
\title{On the set of Fixed Points of the Parallel Symmetric Sand Pile Model\thanks{This work is supported in part by the National Fundamental Research Programme in Natural Sciences of Vietnam, and the Complex System Institute of Lyon.}}
\address{\addressmark{1}LIP (UMR 5668 - CNRS - Universit\'e de Lyon - ENS de Lyon) - 46 all\'e d'Italie 69364 Lyon Cedex 7, France\\
  \addressmark{2}Institute of Mathematics, VAST - 18 Hoang Quoc Viet Road, Cau Giay,10307,  Hanoi, Vietnam}
\keywords{Discrete Dynamical System, Sand Pile Model, Fixed point}
\begin{document}
\maketitle
\begin{abstract}
  Sand Pile Models are discrete dynamical systems emphasizing the phenomenon of {\em Self-Organized Criticality}. From a configuration composed of a finite number of stacked grains, we apply on every possible positions (in parallel) two grain moving transition rules. The transition rules permit one grain to fall to its right or left (symmetric) neighboring column if the difference of height between those columns is larger than 2. The model is nondeterministic and grains always fall downward. We propose a study of the set of fixed points reachable in the Parallel Symmetric Sand Pile Model (PSSPM). Using a comparison with the Symmetric Sand Pile Model (SSPM) on which rules are applied once at each iteration, we get a continuity property. This property states that within PSSPM we can't reach every fixed points of SSPM, but a continuous subset according to the lexicographic order. Moreover we define a successor relation to browse exhaustively the sets of fixed points of those models.
\end{abstract}


\section{Introduction}

Sand Pile Models were introduced in 1988 (\cite{soc}) to highlight {\em Self-Organized Criticality} (SOC). SOC characterizes dynamical systems having critical attractors, {\em i.e.}, systems that evolve toward a stable state from which small perturbations have uncontrolled consequences on the system. This property is straightforward to figure out in the scope of sand pile models : consider a flat table on which we add grains one by one. After a moment, the amount of grains will look like a circular cone which base diameter will continue to grow as we add grains one by one. Some grain additions create avalanches, chain reactions involving numerous grain falls. Some avalanches stop quickly, others continue until they reach the table top. Now remark that whatever the size of the pile is, there will always be one more single grain addition which will increase the base diameter of the cone. So the tiniest possible perturbation --- one single grain addition --- can create an unbounded avalanche. This example illustrates the SOC of sand pile models.

There are many variants of sand pile models. All of them consider local grain moving transitions, applied in sequential or parallel mode (one rule application at each iteration or as many rule applications as possible at each iteration), starting from a finite number of stacked grains. The first model, introduced in \cite{spm}, considers one rule applied sequentially : if the difference of height between columns $i$ and $i+1$ is larger than two, then one grain falls from column $i$ to column $i+1$. The set of reachable configurations has a lattice structure and some other interesting properties, see \cite{order} and \cite{structure}. Furthermore its set of reachable configurations can be generated efficiently (see \cite{massa1}, \cite{massa2} and \cite{massa3}). Applying the rule in parallel on every possible column leads to a completely different description of the model, see \cite{pspm}. We can also add one more rule, symmetric to the previous one : if the difference of height between columns $i$ and $i-1$ is larger than two, then one grain can fall from column $i$ to column $i-1$. This leads to SSPM (symmetric sand pile model), studied in \cite{unimodal} and \cite{sspm}.

In \cite{fppt}, the authors studies PSSPM, the parallel variant of SSPM, and they proved that the form of fixed points of the two models are the same. In this paper, we investigate the set of all fixed points of PSSPM, taking into account their position. We provide a deterministic procedure to reach the extremal (leftmost and rightmost) fixed points of PSSPM according to the total lexicographic order, and prove that any fixed point between these two extremal fixed points reachable in SSPM is also reachable in PSSPM. We also define a successor relation $\triangleleft$ which gives a straightforward way of computing the set of fixed points of PSSPM.

In \cite{bspm} the authors suggest to add rules to get grains also moving forward and backward, to get closer to real life sand piles. \cite{survey} is a survey on sand pile models. An interesting generalization of sand pile models is sand automata, which are powerful enough to simulate cellular automata, see \cite{sa1}, \cite{sa2} and \cite{sa3}.

In this paper, $n$ is a given nonnegative integer.


\section{Parallel Symmetric Sand Pile Model}

In the theory of discrete dynamical systems, a model is defined by its set of configurations and its transition rule(s). We say that a configuration $b$ is {\em reachable} from a configuration $a$ if $b$ is obtained from $a$ by a sequence of transitions. In the scope of sand piles, we are interested in the set of configurations reachable from a finite number of stacked grains.

\begin{notation}
  A configuration $c$ is a sequence of nonnegative integers, with only finitely many positive values. We use an underlined number to denote the position 0 of a sequence. For example $c=(1,4,\underline{3},2,1)$ is the configuration such that $c_{-2}=1,c_{-1}=4,c_0=3,c_1=2,c_2=1$ and for all $i \notin \llbracket -2;2 \rrbracket, c_i=0$.
\end{notation}

We now give formal definitions of SSPM and PSSPM.

\begin{definition}
  SSPM is a discrete dynamical system defined by:
  \begin{itemize}
    \item Initial configuration: $(\underline{n})$.
    \item Local left vertical rule $\mathcal L$: $(\dots,a_{i-1},a_i,\dots) \to (\dots,a_{i-1}+1,a_i-1,\dots)$ if $a_{i-1}+2 \leq a_i$.
    \item Local right vertical rule $\mathcal R$: $(\dots,a_i,a_{i+1},\dots) \to (\dots,a_i-1,a_{i+1}+1,\dots)$ if $a_i \geq a_{i+1} +2$.
    \item Global rule: we apply once the $\mathcal L$ rule, or once the $\mathcal R$ rule.
  \end{itemize}
\end{definition}

SSPM is a non deterministic and sequential model. PSSPM is defined similarly with the rules applies in parallel on each column:

\begin{definition}
  PSSPM is a discrete dynamical dynamical system defined with the same initial configuration and local rules as SSPM, and the following global rule:
  \begin{itemize}
    \item Global rule: we apply $\mathcal L$ and $\mathcal R$ in parallel on every possible column. We apply at most one of the two rules on each column.
  \end{itemize}
\end{definition}

PSSPM is also a non deterministic model, for example from the initial configuration $(\underline{5})$ one has to choose whether applying $\mathcal L$ or $\mathcal R$ on column 0.

Once the model (SSPM or PSSPM) is fixed, we denote $a \to b$ when configuration $a$ reduces in one step to configuration $b$ according to the global transition rule. $\to^*$ denotes the transitive closure of $\to$. We formally define the sets of {\em reachable} configurations as:

\begin{notation}
  SSPM($n$)$=\bigcup \{a | (\underline{n}) \to^* a\}$ is the set of reachable configurations from the initial configuration $(\underline{n})$ by applying SSPM rules.
  
  PSSPM($n$)$=\bigcup \{a | (\underline{n}) \to^* a\}$ is the set of reachable configurations from the initial configuration $(\underline{n})$ by applying PSSPM rules.
  
  SSPM = $\bigcup \limits_{n \in \mathbb{N}}$ SSPM($n$) and PSSPM = $\bigcup \limits_{n \in \mathbb{N}}$ PSSPM($n$).
\end{notation}

In both models, one can note that any configuration $c$ reachable from the initial configuration $(\underline{n})$ verifies $\sum \limits_{i} c_i = n$ and  for some $j$, $\dots \leq a_{j-2} \leq a_{j-1} \leq a_j \geq a_{j+1} \geq a_{j+2} \geq ...$. This last observation leads to the fact that within PSSPM, there is at most one column $j$ on which a choice between $\mathcal L$ and $\mathcal R$ happens (such a column $j$ must verify $a_{j-1} < a_j$ and $a_j > a_{j+1}$).

On figure \ref{fig:psspm5} we present in PSSPM the complete set of reachable configurations from $(\underline{5})$. A reachable configuration from which no transition can be applied is a {\em fixed point}.

\begin{figure}[!h]
  \begin{center}
  \begin{tikzpicture}
    \def\px0{0}
    \def\py0{0}
    \foreach \y in {0,0.3,...,1.5}{
      \draw (\px0,\py0 + \y) rectangle ++ (0.3,0.3);
    }
    \draw[line width=2pt] (\px0,\py0) -- ++ (.3,0);   
    
    \def\px1{-1}
    \def\py1{-1.5}
    \foreach \y in {0,0.3,...,1.2}{
      \draw (\px1,\py1 + \y) rectangle ++ (0.3,0.3);
    }
    \draw (\px1 - 0.3, \py1) rectangle ++ (0.3,0.3);
    \draw[line width=2pt] (\px1,\py1) -- ++ (.3,0);
    
    \def\px1{1}
    \def\py1{-1.5}
    \foreach \y in {0,0.3,...,1.2}{
      \draw (\px1,\py1 + \y) rectangle ++ (0.3,0.3);
    }
    \draw (\px1 + 0.3, \py1) rectangle ++ (0.3,0.3);
    \draw[line width=2pt] (\px1,\py1) -- ++ (.3,0);
    
    \def\px1{-2}
    \def\py1{-3}
    \foreach \y in {0,0.3,...,0.9}{
      \draw (\px1,\py1 + \y) rectangle ++ (0.3,0.3);
    }
    \draw (\px1 - 0.3, \py1) rectangle ++ (0.3,0.3);
    \draw (\px1 - 0.3, \py1 + 0.3) rectangle ++ (0.3,0.3);
    \draw[line width=2pt] (\px1,\py1) -- ++ (.3,0);
    
    \def\px1{2}
    \def\py1{-3}
    \foreach \y in {0,0.3,...,0.9}{
      \draw (\px1,\py1 + \y) rectangle ++ (0.3,0.3);
    }
    \draw (\px1 + 0.3, \py1) rectangle ++ (0.3,0.3);
    \draw (\px1 + 0.3, \py1 + 0.3) rectangle ++ (0.3,0.3);
    \draw[line width=2pt] (\px1,\py1) -- ++ (.3,0);
    
    \def\px1{0}
    \def\py1{-3}
    \foreach \y in {0,0.3,...,0.9}{
      \draw (\px1,\py1 + \y) rectangle ++ (0.3,0.3);
    }
    \draw (\px1 - 0.3, \py1) rectangle ++ (0.3,0.3);
    \draw (\px1 + 0.3, \py1) rectangle ++ (0.3,0.3);
    \draw[line width=2pt] (\px1,\py1) -- ++ (.3,0);
    
    \def\px1{-1}
    \def\py1{-4.2}
    \draw (\px1 - 0.3, \py1) rectangle ++ (0.3,0.3);
    \draw (\px1 - 0.3, \py1 + 0.3) rectangle ++ (0.3,0.3);
    \draw (\px1, \py1) rectangle ++ (0.3,0.3);
    \draw (\px1, \py1 + 0.3) rectangle ++ (0.3,0.3);
    \draw (\px1 + 0.3, \py1) rectangle ++ (0.3,0.3);
    \draw[line width=2pt] (\px1,\py1) -- ++ (.3,0);
    
    \def\px1{1}
    \def\py1{-4.2}
    \draw (\px1 - 0.3, \py1) rectangle ++ (0.3,0.3);
    \draw (\px1 + 0.3, \py1 + 0.3) rectangle ++ (0.3,0.3);
    \draw (\px1, \py1) rectangle ++ (0.3,0.3);
    \draw (\px1, \py1 + 0.3) rectangle ++ (0.3,0.3);
    \draw (\px1 + 0.3, \py1) rectangle ++ (0.3,0.3);
    \draw[line width=2pt] (\px1,\py1) -- ++ (.3,0);
    
    \def\px1{-2}
    \def\py1{-5.4}
    \draw (\px1 - 0.3, \py1) rectangle ++ (0.3,0.3);
    \draw (\px1 - 0.6, \py1) rectangle ++ (0.3,0.3);
    \draw (\px1, \py1) rectangle ++ (0.3,0.3);
    \draw (\px1, \py1 + 0.3) rectangle ++ (0.3,0.3);
    \draw (\px1 + 0.3, \py1) rectangle ++ (0.3,0.3);
    \draw[line width=2pt] (\px1,\py1) -- ++ (.3,0);
    
    \def\px1{2}
    \def\py1{-5.4}
    \draw (\px1 - 0.3, \py1) rectangle ++ (0.3,0.3);
    \draw (\px1 + 0.6, \py1) rectangle ++ (0.3,0.3);
    \draw (\px1, \py1) rectangle ++ (0.3,0.3);
    \draw (\px1, \py1 + 0.3) rectangle ++ (0.3,0.3);
    \draw (\px1 + 0.3, \py1) rectangle ++ (0.3,0.3);
    \draw[line width=2pt] (\px1,\py1) -- ++ (.3,0);
    
    \draw [->] (-0.2,-0.2) -- node [above] {\scriptsize $\mathcal L$} (-0.5,-0.4);
    \draw [->] (0.5,-0.2) -- node [above] {\scriptsize $\mathcal R$} (0.8,-0.4);
    
    \draw [->] (-1.2,-1.7) -- node [left] {\scriptsize $\mathcal L$} (-1.5,-2.1);
    \draw [->] (-0.5,-1.7) -- node [right] {\scriptsize $\mathcal R$} (-0.2,-2.1);
    \draw [->] (0.8,-1.7) -- node [left] {\scriptsize $\mathcal L$} (0.5,-2.1);
    \draw [->] (1.5,-1.7) -- node [right] {\scriptsize $\mathcal R$} (1.8,-2.1);
    
    \draw [->] (-1.85,-3.2) -- (-1.85,-4.6);
    \draw [->] (2.15,-3.2) -- (2.15,-4.6);
    
    \draw [->] (-0.2,-3.2) -- node [left] {\scriptsize $\mathcal L$} (-0.5,-3.6);
    \draw [->] (0.5,-3.2) -- node [right] {\scriptsize $\mathcal R$} (0.8,-3.6);
    
    \draw [->] (-1.2,-4.4) -- (-1.5,-4.7);
    \draw [->] (1.5,-4.4) -- (1.8,-4.7);
  \end{tikzpicture}
  \caption{The set of reachable configurations in PSSPM starting from the initial configuration $(\underline{5})$. A bold line denotes column 0. Edges are labelled according to the choice $\mathcal L$ or $\mathcal R$, whenever there is one. Two fixed points are reachable from $(\underline{5})$: $(1,1,\underline{2},1)$ and $(1,\underline{2},1,1)$.}
  \label{fig:psspm5} 
  \end{center}
\end{figure}
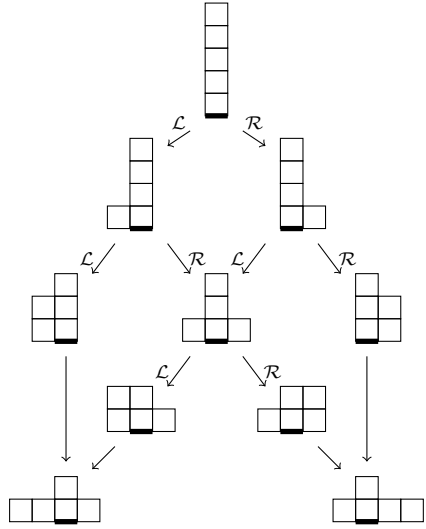

A trivial --- nevertheless motivating --- result is that the set of reachable configurations in PSSPM is a subset of reachable configurations in SSPM:

\begin{proposition}\label{prop:psspm}
  PSSPM $\subsetneq$ SSPM.
\end{proposition}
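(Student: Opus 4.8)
The plan is to prove the two parts of the statement separately: first inclusion $\mathrm{PSSPM} \subseteq \mathrm{SSPM}$, then strictness $\mathrm{PSSPM} \neq \mathrm{SSPM}$.

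For the inclusion, I would argue that every single PSSPM transition can be simulated by a finite sequence of SSPM transitions, so that $(\underline{n}) \to^* a$ in PSSPM implies $(\underline{n}) \to^* a$ in SSPM, and hence $\mathrm{PSSPM}(n) \subseteq \mathrm{SSPM}(n)$ for every $n$, from which the union statement follows. The key observation is that a parallel step of PSSPM consists of applying $\mathcal L$ or $\mathcal R$ simultaneously on a set of columns where the local conditions ($a_{i-1}+2 \le a_i$ for $\mathcal L$, or $a_i \ge a_{i+1}+2$ for $\mathcal R$) are satisfied. I would show these individual rule applications can be linearly ordered and applied one at a time in SSPM so that each one remains legal when it is executed. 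Using the shape remark from the excerpt — any reachable configuration is nonincreasing away from a peak — the columns chosen in a parallel step are well-separated enough that firing them sequentially in a suitable order (say, processing the $\mathcal L$ moves on the increasing side from the peak outward, and the $\mathcal R$ moves on the decreasing side from the peak outward, or some analogous scheme) never destroys the precondition of a not-yet-fired move; one checks that a grain arriving from a neighbor only increases a height difference that was already $\ge 2$ on the far side. This establishes $a \to b$ in PSSPM $\Rightarrow a \to^* b$ in SSPM.

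For strictness, it suffices to exhibit one configuration in $\mathrm{SSPM}$ that is not in $\mathrm{PSSPM}$; the smallest useful $n$ should work, and Figure~\ref{fig:psspm5} already gives the relevant data for $n=5$. In PSSPM, Figure~\ref{fig:psspm5} shows that starting from $(\underline 5)$ only the two fixed points $(1,1,\underline 2,1)$ and $(1,\underline 2,1,1)$ are reachable, and one checks directly that the whole reachable set from $(\underline 5)$ in PSSPM is exactly the nine configurations drawn. On the other hand, in SSPM one can reach a configuration not in this list — for instance a fixed point such as $(1,\underline{2},1,1)$ versus $(1,1,1,\underline{2})$ — because SSPM, being sequential, allows imbalanced choices that PSSPM's forced simultaneity forbids. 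Concretely I would trace an SSPM run from $(\underline 5)$ to a configuration (fixed point or not) that does not appear among the PSSPM-reachable configurations of $n=5$, and note it lies in $\mathrm{SSPM}(5) \setminus \mathrm{PSSPM}(5)$, hence in $\mathrm{SSPM}\setminus\mathrm{PSSPM}$.

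The main obstacle is the inclusion direction, specifically pinning down an ordering of the individual moves within one parallel PSSPM step that keeps every precondition valid in the sequential SSPM simulation — one must rule out the possibility that two adjacent columns both fire and "interfere", and here the at-most-one-rule-per-column clause plus the unimodal shape of reachable configurations are exactly what make the argument go through. The strictness direction is routine once a witness is exhibited; care is only needed to justify that the chosen witness genuinely fails to be PSSPM-reachable, which for small $n$ is a finite check backed by Figure~\ref{fig:psspm5}.
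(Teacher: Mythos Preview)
Your overall strategy matches the paper's exactly: show $\mathrm{PSSPM}\subseteq\mathrm{SSPM}$ by simulating one parallel step with a sequence of sequential steps, then exhibit a witness in $\mathrm{SSPM}(5)\setminus\mathrm{PSSPM}(5)$ using Figure~\ref{fig:psspm5}. The paper in fact dispatches the inclusion with the single word ``obvious''; your more careful discussion of ordering the individual firings is not wrong, but it is more work than the statement requires.

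The gap is in the strictness part: you promise a witness but the concrete candidates you mention do not work. The configuration $(1,\underline{2},1,1)$ \emph{is} one of the two PSSPM fixed points displayed in Figure~\ref{fig:psspm5}, so it cannot serve as a separating example, and $(1,1,1,\underline{2})$ is not a fixed point at all (the height difference between column $0$ and column $1$ is $2$), nor is it clear how you would reach it from $(\underline{5})$. The paper's actual witness is $(1,\underline{1},2,1)$, obtained in SSPM via
\[
(\underline{5}) \to (\underline{4},1) \to (\underline{3},2) \to (\underline{3},1,1) \to (\underline{2},2,1) \to (1,\underline{1},2,1),
\]
and then one simply observes that this configuration is absent from the (finite) transition diagram of $\mathrm{PSSPM}(5)$ in Figure~\ref{fig:psspm5}. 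Replace your tentative examples with this explicit derivation and the proof is complete.
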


\begin{proof} 
  PSSPM $\subseteq$ SSPM is obvious.
  Let us show that PSSPM(5) $\subsetneq$ SSPM(5) which leads to the result. Using SSPM global transition rule, $(\underline{5}) \to (\underline{4},1) \to (\underline{3},2) \to (\underline{3},1,1) \to (\underline{2},2,1) \to (1,\underline{1},2,1)$, so $(1,\underline{1},2,1) \in SSPM(5)$. On figure \ref{fig:psspm5} we can see that using PSSPM parallel rule application, $(1,\underline{1},2,1) \notin PSSPM(5)$.
\end{proof} 

The set of fixed points of PSSPM is strictly included in the set of fixed points of SSPM (note that it does not hold in the one sided case, where SPM and PSPM have exactly the same fixed points). The following section concentrates on the properties of former compared to latter.


\section{Fixed points of PSSPM}

We propose a study of the set of fixed points of PSSPM. We give a deterministic procedure to reach the rightmost and leftmost fixed points of PSSPM($n$), respectively corresponding to the smallest and greatest configurations according to the lexicographic order. We also prove that every fixed point of SSPM($n$) between the smallest and greatest fixed point of PSSPM($n$) are reachable in PSSPM($n$). As a consequence, the set of PSSPM($n$) fixed points inherits a kind of continuity property.

The {\em transition diagram} of PSSPM($n$) is the edge-labeled directed multigraph $G_n=(V_n,E_n)$ where $V_n=$ PSSPM($n$) is the set of reachable configurations from the initial configuration $(\underline{n})$, and $E_n \subseteq V_n \times V_n \times \{\mathcal L, \mathcal R\}$ such that $(a,b,\alpha) \in E_n$ if and only if $a \to b$ according to PSSPM rules where we choose (recall that there is at most one choice) to apply the $\alpha$ rule (when there is no choice from $a$ to $b$, both $(a,b,\mathcal L)$ and $(a,b,\mathcal R)$ belong to $E_n$). Figure \ref{fig:psspm5} is the transition diagram of PSSPM(5), where multiple edges are replaced by a single unlabeled edge.

From a configuration $a$, we consider the two configurations obtained according to the choices $\mathcal L$ and $\mathcal R$. Then, we let those two configurations evolve using the same choice at each step. We obtain two sequences of configurations, and we will see that they stay very close, in other words they represent very similar paths within the transition diagram. We introduce a formal notation, $\mathcal L(a)$, standing for the configuration obtained by choosing the top grain to fall to the left if possible (if it is not possible, the top grain falls to the right):

\begin{notation}
  Let $a$ be a configuration such that $a \in V_n$ for some fixed integer $n$. $\mathcal L(a)$ is the configuration defined as:
  \begin{enumerate}
    \item if $\exists~ b$ such that $(a,b,\mathcal L) \in E_n$ then $\mathcal L(a)=b$,
    \item else if $\exists~ b$ such that $(a,b,\mathcal R) \in E_n$ then $\mathcal L(a)=b$,
    \item else $\mathcal L(a)=a$.
  \end{enumerate}
  $\mathcal R(a)$ is defined similarly.
  
  Let $\omega=\omega_1\dots\omega_k$ be a word over the alphabet $\{\mathcal L,\mathcal R\}$, $\omega(a)$ is the configuration defined inductively as $\omega(a)=\omega_2\dots\omega_k(\omega_1(a))$.
\end{notation}

The idea will be to consider a configuration $a$ of PSSPM($n$) for a fixed integer $n$ and the two configurations $\mathcal R(a)$ and $\mathcal L(a)$. Those two configurations are intuitively similar each other. Then we will see that for every word $\omega$ over the alphabet $\{\mathcal L,\mathcal R\}$, the configurations $\omega(\mathcal R(a))$ and $\omega(\mathcal L(a))$ are also similar according to the relation $\overset{*}{\triangleleft}$ defined below. This is the key argument of our study, stated in Proposition \ref{lemma:main}. Finally, we use known results about SSPM and further developments to show that when we reach fixed points, the configurations are very similar (see $\triangleleft$ defined below). This leads to Theorem \ref{theorem:continuous}, relating the set of fixed points reachable in PSSPM($n$) to that reachable in SSPM($n$).

\begin{definition}
  Let $\Delta(a,\!b)$ be the sequence of differences between configurations $a$ and $b$, $\Delta_i(a,\!b)=a_i-b_i$.\\
  We define a notion of similarity or closeness between configurations, denoted by the following relations:
  \[
    \begin{array}{rcl}
      a \triangleleft b & \iff & \Delta(a,b) \in 0^*-\!\!10^*10^*\\
      a \overset{*}{\triangleleft} b & \iff & \Delta(a,b) \in (0^*-\!\!10^*10^*)^*
    \end{array}
  \]
  where $-\!1$ is a minus one value. As a convention $\epsilon=0^\omega$, so that $a=b$ implies $a \overset{*}{\triangleleft} b$.
\end{definition}

 The reader should note that $\overset{*}{\triangleleft}$ is not the reflexive transitive closure of $\triangleleft$, it is just a kind of {\em non-strict} variant of $\triangleleft$.

The following lemma states the similarity of the configurations obtained when we follow very close paths in the transition diagram of PSSPM($n$). The weak relation $\overset{*}{\triangleleft}$ is used to compare obtained configurations all along the evolution toward fixed points. We will see in Proposition \ref{lemma:psspm} that the relation between fixed points can be strengthened into $\triangleleft$.

\begin{proposition}\label{lemma:main}
  Let $a \in$ PSSPM($n$). For all $\omega \in \{\mathcal L,\mathcal R\}^*,$
  $$\omega(\mathcal R(a)) \overset{*}{\triangleleft} \omega(\mathcal L(a))$$
\end{proposition}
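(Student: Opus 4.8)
The plan is to argue by induction on $|\omega|$. The base case $\omega=\epsilon$ is the statement $\mathcal R(a)\overset{*}{\triangleleft}\mathcal L(a)$. For the inductive step, write $\omega=\omega_1\omega'$ with $\omega_1\in\{\mathcal L,\mathcal R\}$, so that $\omega(c)=\omega'\big(\omega_1(c)\big)$. Since $a\in$ PSSPM($n$), since PSSPM($n$) is closed under the transition relation $\to$, and since $\mathcal L$ and $\mathcal R$ either apply a transition or fix a fixed point, every configuration appearing below lies in PSSPM($n$), where all the operators are defined. It therefore suffices to prove, besides the base case, a \emph{one-step preservation} statement: if $b,b'\in$ PSSPM($n$) with $b\overset{*}{\triangleleft}b'$ and $\alpha\in\{\mathcal L,\mathcal R\}$, then $\alpha(b)\overset{*}{\triangleleft}\alpha(b')$. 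Applying this repeatedly to the letters of $\omega'$, starting from $\mathcal R(a)\overset{*}{\triangleleft}\mathcal L(a)$, yields the proposition.

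For the base case, recall that every configuration of PSSPM($n$) is unimodal and has at most one ``choice column'' $j$, where $a_{j-1}<a_j>a_{j+1}$ and both $\mathcal L$ and $\mathcal R$ are applicable at $j$. If there is no choice column then $\mathcal R(a)=\mathcal L(a)$ and we are done since $a\overset{*}{\triangleleft}a$. Otherwise, on the nondecreasing slope only $\mathcal L$-moves can fire and on the nonincreasing slope only $\mathcal R$-moves can fire; since these firings are read off from $a$ alone and column $j$ receives no grain, the configuration $c$ obtained from $a$ by performing every firing except the one at column $j$ is common to both, and $\mathcal L(a)$ (resp. $\mathcal R(a)$) is $c$ with one extra grain moved $j\to j-1$ (resp. $j\to j+1$). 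Hence $\Delta(\mathcal R(a),\mathcal L(a))$ equals $-1$ at $j-1$, $+1$ at $j+1$, and $0$ elsewhere, so $\mathcal R(a)\triangleleft\mathcal L(a)$, and a fortiori $\mathcal R(a)\overset{*}{\triangleleft}\mathcal L(a)$.

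The heart of the proof is one-step preservation; fix $\alpha=\mathcal L$, the case $\alpha=\mathcal R$ being symmetric. A parallel $\mathcal L$-step of a unimodal $c$ is determined by the sets $S_L(c)=\{i:c_{i-1}+2\le c_i\}$ and $S_R(c)=\{i:c_i\ge c_{i+1}+2\}$: every column of $S_L(c)$ ships a grain left, every column of $S_R(c)$ ships a grain right, and the at most one column $p\in S_L(c)\cap S_R(c)$ resolves to a left move. Writing $\delta=\Delta(b,b')$, whether $i\in S_L$ is the same for $b$ and $b'$ unless $\delta_i-\delta_{i-1}\neq 0$, which occurs only at the (at most four) positions bordering a maximal block $0^*{-}1\,0^*1\,0^*$ of $\delta$; similarly $i\in S_R$ depends only on $\delta_{i+1}-\delta_i$. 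So the descent sets of $b$ and $b'$ coincide away from bounded neighbourhoods of the blocks of $\delta$, and the two parallel steps agree there too. The plan is then to treat the blocks one at a time: enumerate the finitely many local shapes of $b'$ around a block together with the induced discrepancies in the descent sets, perform both parallel $\mathcal L$-steps locally, and check that the local contribution to $\Delta(\mathcal L(b),\mathcal L(b'))$ is again a concatenation of blocks $0^*{-}1\,0^*1\,0^*$ --- a block may be shifted by one, split, or merge with a neighbour, but it never inverts into $+1\,\dots\,-1$ and never creates an entry of modulus $2$. As the global difference is the concatenation of these local contributions, it lies in $(0^*{-}1\,0^*1\,0^*)^*$, i.e. $\mathcal L(b)\overset{*}{\triangleleft}\mathcal L(b')$.

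The main obstacle is exactly this case analysis, and three situations need care: a block straddling or adjacent to the peak, where membership $p\in S_L\cap S_R$ --- hence which rule is actually taken --- can differ between $b$ and $b'$; a block one of whose endpoints is itself a firing column, where the firing sets genuinely differ; and the degenerate sub-cases where $\mathcal L(b)=b$ or $\mathcal L(b')=b'$ because no rule applies. I expect the cleanest way to keep the alternating and left-to-right ordered structure of $(0^*{-}1\,0^*1\,0^*)^*$ under control is to use the ``grain-transport'' reading of $\overset{*}{\triangleleft}$ --- namely that $b$ is $b'$ with a family of pairwise non-crossing, left-to-right ordered, rightward single-grain moves --- and to follow how a single parallel step transports each of those grains, rather than brute-forcing the local shapes; this viewpoint also explains why the two evolving configurations stay ``synchronized'' outside the defects, and, should plain $\overset{*}{\triangleleft}$ turn out to be too weak to carry the induction through step by step, it is the natural place to look for a strengthened invariant.
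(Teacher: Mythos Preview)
Your overall architecture matches the paper's: induction on $|\omega|$, the base case handled via the unique choice column, and the inductive step reduced to a one-step preservation claim. The base case is fine.

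The gap is that the inductive step is only a plan. You announce an enumeration of local shapes, list three delicate situations, and then propose an alternative ``grain-transport'' viewpoint---explicitly leaving open whether $\overset{*}{\triangleleft}$ alone is strong enough. That enumeration is the entire content of the proposition, and you have not carried it out.

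More specifically, your formulation of the inductive step as a \emph{general} implication ``$b,b'\in$ PSSPM($n$), $b\overset{*}{\triangleleft}b'$ $\Rightarrow$ $\alpha(b)\overset{*}{\triangleleft}\alpha(b')$'' is stronger than what the paper actually proves, and the paper's argument gives a concrete reason to doubt it suffices. The dangerous case is at the peak: column $i$ fires left in $b$ and right in $c$, with $\Delta_{i-1}(b,c)=0$ and $\Delta_i(b,c)=1$; one then gets $b_i\ge b_{i+1}+3$. The paper does not exclude this from the relation $b\overset{*}{\triangleleft}c$ alone. It invokes a history-dependent technical lemma (Lemma~\ref{lemma:technical}): since $b$ and $c$ are the specific configurations $\omega_1\cdots\omega_{k-1}(\mathcal R(a))$ and $\omega_1\cdots\omega_{k-1}(\mathcal L(a))$, one can trace back along the common trajectory to a moment where column $i$ had height difference at most $2$ with $i+1$ and has remained maximal since, and the lemma then forbids $b_i\ge b_{i+1}+3$. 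This argument is unavailable if you only assume $b\overset{*}{\triangleleft}b'$ for arbitrary PSSPM configurations. So your suspicion that ``plain $\overset{*}{\triangleleft}$ may be too weak to carry the induction'' is exactly right: the paper does not prove a stand-alone one-step preservation but keeps the full parallel histories in play and uses them. To finish, you must either identify and prove a genuinely stronger invariant closed under one parallel step, or do what the paper does---a column-by-column update of $\Delta$ split into left/central/right parts, with the central cases eliminated via the technical lemma about the trajectory.
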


We first present a technical lemma used to avoid some impossible cases in the proof of Proposition \ref{lemma:main}.

\begin{lemma}[technical]\label{lemma:technical}
  Consider a sequence in PSSPM($n$).
  $$c^1 \to c^2 \to \dots \to c^k$$
  If there exists a column $i$ such that
    \begin{enumerate}
      \item $i$ remains one of the highest columns
      {\em i.e.}, $\forall~ 1 \leq t \leq k,~ c^t_i = \max \limits_j c^t_j$
      \item $c^1_i \leq c^1_{i+1} + 2$ (resp. $c^1_{i-1} + 2 \geq c^1_i$)
    \end{enumerate}
  Then $\forall~ 1 \leq t \leq k,~ c^t_i \leq c^t_{i+1} + 2$ (resp. $c^t_{i-1} + 2 \geq c^t_i$).
\end{lemma}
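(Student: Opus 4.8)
The plan is to reduce everything to tracking a single scalar. By the left--right mirror symmetry of PSSPM it suffices to prove the non-parenthesised statement (the ``resp.'' version being its mirror image, with $i+1$ replaced by $i-1$ and $\mathcal R$ by $\mathcal L$ throughout). So I fix the column $i$ of the hypotheses and set $d^t := c^t_i - c^t_{i+1}$, which is $\ge 0$ since $c^t_i = \max_j c^t_j$. The goal becomes: show $d^t \le 2$ for every $t$, by induction on $t$; the base case $t=1$ is precisely the second hypothesis. Before the induction I would record two elementary facts. First, column $i$ never receives a grain during the sequence, since a grain can enter column $i$ only from a neighbour exceeding it by at least $2$, which is impossible when $c^t_i$ is maximal; hence $c^{t+1}_i \le c^t_i$. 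Second, by the clause ``we apply at most one of the two rules on each column'', in one step column $i+1$ loses at most one grain, so $c^{t+1}_{i+1}\ge c^t_{i+1}-1$; and any grain that column $i+1$ gains only helps the inequalities below.

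For the inductive step I would assume $d^t\le 2$ and split into two cases. If $d^t\le 1$, then $c^t_i<c^t_{i+1}+2$, so rule $\mathcal R$ cannot send a grain from column $i$ to column $i+1$; combining the two facts above yields $d^{t+1}\le d^t+1\le 2$. If $d^t=2$, i.e.\ $c^t_i=c^t_{i+1}+2$, then rule $\mathcal R$ \emph{is} applicable at column $i$ towards $i+1$, so in the parallel step column $i$ must give away a grain --- either to $i+1$ (via $\mathcal R$) or to $i-1$ (via $\mathcal L$), this being the one place where a genuine choice can occur. If that grain goes to $i+1$, then $c^{t+1}_i=c^t_i-1$ and $c^{t+1}_{i+1}\ge c^t_{i+1}+1-1=c^t_{i+1}$, so $d^{t+1}\le d^t-1=1$; if it goes to $i-1$, then $c^{t+1}_i=c^t_i-1=c^t_{i+1}+1$ and $c^{t+1}_{i+1}\ge c^t_{i+1}-1$, so $d^{t+1}\le 2$. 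Either way $d^{t+1}\le 2$, which closes the induction.

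The hard part is exactly the case $d^t=2$: a priori the gap $d$ could jump to $3$, namely if column $i$ did nothing while column $i+1$ passed a grain on to $i+2$. Excluding this is where the argument genuinely uses the parallel semantics of the model --- a column that can fire does fire --- together with the observation that $d^t=2$ is precisely what makes column $i$ able to fire towards $i+1$. The rest is bookkeeping about how much the heights $c_i$ and $c_{i+1}$ can move in one parallel step, which follows directly from the ``at most one of the two rules on each column'' clause.
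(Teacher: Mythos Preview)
Your argument is correct and is essentially the paper's own proof, just written out with the explicit difference $d^t=c^t_i-c^t_{i+1}$ and a slightly more detailed case split. The key points---column $i$ never receives a grain, the difference can increase by at most~$1$ per step, and when $d^t=2$ the parallel semantics force column $i$ to fire---are exactly those used in the paper; your remark that $\mathcal R$ cannot fire from $i$ to $i+1$ when $d^t\le 1$ is harmless but not actually needed for the bound $d^{t+1}\le d^t+1$.
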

\begin{proof}
  We proceed by induction on the iterations. The base case is verified according to the second hypothesis. The top column $i$ can't receive any grain during the iterations under consideration so the height difference with column $i+1$ (resp. $i-1$) can only be increased by 1 if $i$ doesn't lose a grain and $i+1$ (resp. $i-1$) loses a grain. In any other case the height difference doesn't increase. So if the height difference is at most 1, then it can't be increased to a difference greater than 2. If the height difference is 2, then column $i$ loses a grain so the height difference doesn't increase.
\end{proof}

\begin{proof}[of Proposition \ref{lemma:main}]
  We proceed by induction on the length of $\omega$. The base case is obvious : either there is no choice from $a$ to $\mathcal L(a)$ and $\mathcal R(a)$, and hence $\mathcal L(a) = \mathcal R(a)$ implies $\mathcal R(a) \overset{*}{\triangleleft} \mathcal L(a)$ or there is a choice on column $i$ and hence
  \begin{itemize}
    \item $\mathcal R(a)_{i-1} = \mathcal L(a)_{i-1} - 1$
    \item $\mathcal R(a)_i = \mathcal L(a)_i$
    \item $\mathcal R(a)_{i+1} = \mathcal L(a)_{i+1} + 1$
    \item $\forall j \notin \{i-1,i,i+1\},~ \mathcal R(a)_j = \mathcal L(a)_j$
  \end{itemize}
  so $\mathcal R(a) \overset{*}{\triangleleft} \mathcal L(a)$.
  By induction hypothesis, we are considering two configurations $b=\omega_1\dots\omega_{k-1}(\mathcal R(a))$ and $c=\omega_1\dots\omega_{k-1}(\mathcal L(a))$ such that $b \overset{*}{\triangleleft} c$ and we will now prove that $\omega_k(b) \overset{*}{\triangleleft} \omega_k(c)$.
  
  For the sake of clarity, we denote $d$ (resp. $e$) the configuration such that $b \overset{\omega_k}{\to} d$ (resp. $c \overset{\omega_k}{\to} e$).
    
  We do an induction on the columns, and construct $\Delta(d,e)$ from our knowledge on $\Delta(b,c)$, from left to right according to the behavior of each column $i$ in $b$ and $c$. Considering the rule application on column $i$ of a configuration $h$ gives us three informations:
  \begin{itemize}
    \item does column $i-1$ receive a grain from its right neighbor, denoted $\overset{\leftarrow}{h}_{i-1} \in \{0,1\}$;
    \item does column $i$ give a grain to one of its neighbors, denoted $\overline{h}_i \in \{0,1\}$;
    \item does column $i+1$ receive a grain from its left neighbor, denoted $\overset{\rightarrow}{h}_{i+1} \in \{0,1\}$.
  \end{itemize}
  
  In order to conclude, we will use the fact that
  $$\text{for all } j,~ \Delta_j(d,e)=\Delta_j(b,c) + (\overset{\leftarrow}{b}_j - \overset{\leftarrow}{c}_j) - (\overline{b}_j - \overline{c}_j) + (\overset{\rightarrow}{b}_j - \overset{\rightarrow}{c}_j)$$

  At each step of the induction, we will "update" $\Delta(b,c)$ with the three informations we get and see that it has always the form $(0^*-\!\!10^*10^*)^*$. We denote $\Delta^i(b,c)$ the sequence $\Delta(b,c)$ updated up to index $i$, defined at each index $j$ as
  $$\Delta^i_j(b,c)=\begin{array}\{{ll}. \Delta_j(b,c) + (\overset{\leftarrow}{b}_j - \overset{\leftarrow}{c}_j) - (\overline{b}_j - \overline{c}_j) + (\overset{\rightarrow}{b}_j - \overset{\rightarrow}{c}_j) = \Delta_j(d,e) & \text{if } j<i \\ \Delta_j(b,c) - (\overline{b}_i - \overline{c}_i) + (\overset{\rightarrow}{b}_i - \overset{\rightarrow}{c}_i) & \text{if } j=i \\ \Delta_j(b,c) + (\overset{\rightarrow}{b}_i - \overset{\rightarrow}{c}_i) & \text{if } j=i+1 \\ \Delta_j(b,c) & \text{if } j>i+1 \end{array}$$
  
  For initialization, there obviously exists an index $s$ such that for all $j \leq s$, there is no grain and hence no rule application on $j$ both in $b$ and $c$. Therefore $\Delta^s(b,c)=\Delta(b,c) \in (0^*-\!\!10^*10^*)^*$.
  
  Let us now eventually prove that for any $i$, $\Delta^{i-1}(b,c) \in (0^*-\!\!10^*10^*)^*$ implies $\Delta^i(b,c) \in (0^*-\!\!10^*10^*)^*$. This will complete the proof of the lemma, since there exists an index $t$ such that for all $j \geq t$, there is no grain and hence no rule application on $j$ both in $b$ and $c$. Therefore $\Delta^t(b,c)=\Delta(d,e)$.
  
  We prove that $\Delta^{i-1}(b,c) \in (0^*-\!\!10^*10^*)^*$ implies $\Delta^i(b,c) \in (0^*-\!\!10^*10^*)^*$ in three stages: left part, central part and right part. The central part is the set of columns where we apply different local rules in $b$ and $c$ (we will see that there is at most one column in the central part). The left (resp. right) part is the set of columns where grains can only fall to the left (resp. right) both in $b$ and $c$. The proofs for the left and right parts are symmetric. The central part is more involved and uses lemma \ref{lemma:technical}.
  \begin{itemize}
    \item left part.\\
    We consider an index $i$ which may be fired to the left or not fired. Since it is not fired to the right, $\overset{\to}{b}_{i+1} - \overset{\to}{c}_{i+1} = 0$ and every index in this part verifies that $\Delta^{i-1}_i(b,c)=\Delta_i(b,c)$. There are 4 cases, some of them are symmetric:
    \begin{itemize}
      \item $i$ fired to the left in both $b$ and $c$, then $\Delta^i(b,c)=\Delta^{i-1}(b,c)$.
      \item $i$ not fired in both $b$ and $c$, again $\Delta^i(b,c)=\Delta^{i-1}(b,c)$.
      \item $i$ fired to the left in $b$, not fired in $c$. Then we have the following changes in $\Delta^i(b,c)$:
      $$\left\{\begin{array}{l} \overset{\leftarrow}{b}_{i-1} - \overset{\leftarrow}{c}_{i-1} = 1\\ \overline{b}_i - \overline{c}_i = 1\\ \overset{\to}{b}_{i+1} - \overset{\to}{c}_{i+1} = 0 \end{array}\right. \text{ hence } \left\{\begin{array}{l} \Delta^i_{i-1}(b,c) = \Delta^{i-1}_{i-1}(b,c)+1\vspace{4pt}\\ \Delta^i_i(b,c) = \Delta^{i-1}_i(b,c)-1\vspace{4pt}\\ \text{elsewhere there is no change} \end{array}\right.$$
      but the rule application on $i$ involves that $b_{i-1} +2 \leq b_i$ and $c_{i-1} +2 > c_i$. There are 3 different cases according to the values of $(b_i - b_{i-1})$, $\Delta_{i-1}(b,c)$ and $\Delta_i(b,c)$: (for any other set of values we haven't  $i$ fired to the left in $b$ and $i$ not fired in $c$)
      \begin{center}
        \begin{tikzpicture}[scale=.4]
  \node at (1,4) {\textcircled{1}};
  \fill[fill=black!30] (0,0) rectangle ++ (2,1);
  \fill[fill=black!30] (1,1) rectangle ++ (1,1);
  \draw[line width=2pt] (0,1) -- ++ (2,0);
  \draw[line width=2pt] (1,2) -- ++ (1,0);
  \draw[line width=2pt] (2,3) -- ++ (-1,0) -- ++ (0,-3);
  \draw[dashed] (0,1) -- ++ (0,-1) -- ++ (2,0) -- ++ (0,3);
  \node at (1.5,-.5) {\tiny $i$};
  \node at (.5,-.5) {\tiny $i\!\!-\!\!1$};
  \node at (6,4) {\textcircled{2}};
  \fill[fill=black!30] (5,0) rectangle ++ (2,2);
  \fill[fill=black!30] (6,2) rectangle ++ (1,1);
  \draw[line width=2pt] (5,1) -- ++ (2,0);
  \draw[line width=2pt] (6,2) -- ++ (1,0);
  \draw[line width=2pt] (7,3) -- ++ (-1,0) -- ++ (0,-3);
  \draw[dashed] (5,1) -- ++ (0,-1) -- ++ (2,0) -- ++ (0,3);
  \node at (6.5,-.5) {\tiny $i$};
  \node at (5.5,-.5) {\tiny $i\!\!-\!\!1$};
  \node at (12.5,4) {\textcircled{3}};
  \fill[fill=black!30] (10,0) rectangle ++ (2,2);
  \draw[line width=2pt] (10,1) -- ++ (2,0);
  \draw[line width=2pt] (11,2) -- ++ (1,0);
  \draw[line width=2pt] (12,3) -- ++ (-1,0) -- ++ (0,-3);
  \draw[dashed] (10,1) -- ++ (0,-1) -- ++ (2,0) -- ++ (0,3);
  \node at (11.5,-.5) {\tiny $i$};
  \node at (10.5,-.5) {\tiny $i\!\!-\!\!1$};
  \fill[fill=black!30] (13,0) rectangle ++ (2,2);
  \fill[fill=black!30] (14,2) rectangle ++ (1,1);
  \draw[line width=2pt] (13,1) -- ++ (2,0);
  \draw[line width=2pt] (14,2) -- ++ (1,0);
  \draw[line width=2pt] (14,3) -- ++ (1,0);
  \draw[line width=2pt] (15,4) -- ++ (-1,0) -- ++ (0,-4);
  \draw[dashed] (13,1) -- ++ (0,-1) -- ++ (2,0) -- ++ (0,3);
  \node at (14.5,-.5) {\tiny $i$};
  \node at (13.5,-.5) {\tiny $i\!\!-\!\!1$};
\end{tikzpicture}\\
        $b$ is pictured with bold lines, $c$ is pictured in grey. If the difference of height between $i-1$ and $i$ is greater than 3 in $b$ then it is greater or equal to 2 in $c$. We recall that $b \overset{*}{\triangleleft} c$.
      \end{center}
      \begin{enumerate}
        \item[\textcircled{1}] $\Delta_{i-1}(b,c)=0$ and $\Delta_i(b,c)=1$.\\
        By induction hypothesis $\Delta^{i-1}(b,c) \in (0^*-\!10^*10^*)^*$, so we can deduce from the equality $\Delta^{i-1}_i(b,c)=\Delta_i(b,c)$ that $\Delta^{i-1}(b,c)$ around index $i$ is
        $$(\dots,-1,\dots,\underset{i}{1},\dots,-1,\dots)$$
        where the right $-1$ may not exist. Therefore, after applying the changes (adding 1 at index $i-1$ and subtracting 1 at index $i$) we still have $\Delta^i(b,c) \in (0^*-\!\!10^*10^*)^*$.
        \item[\textcircled{2}] $\Delta_{i-1}(b,c)=-1$ and $\Delta_i(b,c)=0$.\\
        By induction hypothesis $\Delta^{i-1}(b,c) \in (0^*-\!10^*10^*)^*$, and we also need that $\Delta^{i-2}(b,c) \in (0^*-\!\!10^*10^*)^*$ which is clear according to the base case. We can deduce from the equalities $\Delta^{i-2}_{i-1}(b,c)=\Delta_{i-1}(b,c)$ and for the same reason $\Delta^{i-2}_i(b,c)=\Delta_i(b,c)$ that $\Delta^{i-2}(b,c)$ around index $i-1$ is
        $$(\dots,1,\dots,\underset{i-1}{-1},\underset{i}{0},\dots,1,\dots)$$
        where the left $1$ may not exist. The part on the right of $i-1$ is not altered by the induction step from $i-2$ to $i-1$, therefore $\Delta^{i-1}(b,c)$ around index $i$ is
        $$(\dots,-1,\dots,\underset{i}{0},\dots,1,\dots)$$
        (it can't be equal to $0^\omega$ for the right 1 is still there). Therefore, after applying the changes (adding 1 at index $i-1$ and subtracting 1 at index $i$) we still have $\Delta^i(b,c) \in$ \mbox{$(0^*-\!\!10^*10^*)^*$}.
        \item[\textcircled{3}] $\Delta_{i-1}(b,c)=-1$ and $\Delta_i(b,c)=1$.\\
        The argument is the same as in the case \textcircled{1}.
      \end{enumerate}
      \item $i$ not fired in $b$, fired to the left in $c$. This case is symmetric to the previous one.
    \end{itemize}
    \item central part.\\
    Let us first prove by contradiction that there is at most one column which is fired using different local rules in $b$ and $c$. We name $u$ and $v$ ($u<v$) the two columns. There are two cases:
    \begin{itemize}
      \item In $b$, $u$ fires to the left and $v$ fires to the right. Then in $c$, $u$ fires to the right and $v$ fires to the left. This is impossible since $c$ is an increasing then decreasing sequence.
      \item In $b$, both $u$ and $v$ fires to the left. Then the height difference between $b_{u-1}$ and $b_v$ is at least 4. Since $b \overset{*}{\triangleleft} c$ the differences between $b$ and $c$ are at most 1 which makes impossible the case where $c_u - c_{v+1} \geq 2$ (necessary condition for $u$ to fire to the right in $c$).\\
    \end{itemize}
    We now consider the influence of the index $i$ where $b$ and $c$ have opposite behaviors. Let us take $\omega_k=\mathcal L$ and consider that $i$ is fired to the left in $b$ and to the right in $c$ (other cases are symmetric). We have the following changes in $\Delta^i(b,c)$:
    $$\left\{\begin{array}{l} \overset{\leftarrow}{b}_{i-1} - \overset{\leftarrow}{c}_{i-1} = 1\\ \overline{b}_i - \overline{c}_i = 0\\ \overset{\to}{b}_{i+1} - \overset{\to}{c}_{i+1} = -1 \end{array}\right. \text{ hence } \left\{\begin{array}{l} \Delta^i_{i-1}(b,c) = \Delta^{i-1}_{i-1}(b,c)+1\vspace{4pt}\\ \Delta^i_i(b,c) = \Delta^{i-1}_i(b,c)\vspace{4pt}\\ \Delta^i_{i+1}(b,c) = \Delta^{i-1}_{i+1}(b,c)-1\vspace{4pt}\\ \text{elsewhere there is no change} \end{array}\right.$$
    but the rule application on $i$ involves that $b_{i-1} +2 \leq b_i$, $c_{i-1}+2 > c_i$ (which prevents index $i$ in $c$ to follow the choice $\mathcal L$) and $c_i \geq c_{i+1}+2$. There are 3 cases which can be pictured exactly as in the left part.
    \begin{enumerate}
      \item[\textcircled{1}] $\Delta_{i-1}(b,c)=0$ and $\Delta_i(b,c)=1$.\\
      In this case, $b_{i+1} \leq c_{i+1}$. Since column $i$ in $c$ is fired to the left, $c_i \geq c_{i+1} + 2$, hence $b_i \geq b_{i+1} + 3$ because there is one more grain at $i$ in $b$.\\
      Also, $\Delta_i(b,c) \neq 0$ so there is one iteration during which a firing of index $i$ has been performed in an ancestor of $c$ and not in the corresponding ancestor of $b$ (in which the height difference between $i$ and $i+1$ was lesser than 2), or there is one iteration during which index $i$ received a grain in an ancestor of $b$ but not in the corresponding ancestor of $c$ (in this case, $i$ became and remains the highest column in the chain leading to $b$ and there exist an iteration where $i$ is not fired, so that it became the only highest, hence the height difference between $i$ and $i+1$ was lesser than 2).\\
      The conditions of lemma \ref{lemma:technical} are verified and $b_i \geq b_{i+1} + 3$, this case is impossible.
      \item[\textcircled{2}] $\Delta_{i-1}(b,c)=-1$ and $\Delta_i(b,c)=0$.\\
      By induction hypothesis $\Delta^{i-1}(b,c) \in (0^*-\!\!10^*10^*)^*$, so we can deduce from the fact that $i$ can't receive any grain (it is obviously one of the top columns of $b$ and $c$) that $\Delta^{i-1}_i(b,c)=\Delta_i(b,c)$. Moreover, $\Delta^{i-1}_j(b,c)$ for $j>i$ is still equal to $\Delta_j(b,c)$. Let us recall that $b \overset{*}{\triangleleft} c$ and $\Delta_{i-1}(b,c)=-1$. As a consequence, $\Delta^{i-1}(b,c)$ around index $i$ is
      $$(\dots,-1,\dots,\underset{i}{0},\dots,1,\dots)$$
      Therefore, after applying the changes (adding 1 at index $i-1$, subtracting 1 at index $i+1$), we still have $\Delta^i(b,c) \in (0^*-\!\!10^*10^*)^*$.
      \item[\textcircled{3}] $\Delta_{i-1}(b,c)=-1$ and $\Delta_i(b,c)=1$.\\
      For the same reason as above, we prove using lemma \ref{lemma:technical} that this case is impossible.
    \end{enumerate}
    \item right part.\\
    This part is symmetric to the left part.
  \end{itemize}
  We proved that $\Delta^{i-1}(b,c) \in (0^*-\!\!10^*10^*)^*$ implies $\Delta^i(b,c) \in (0^*-\!\!10^*10^*)^*$, which concludes the proof that $d \overset{*}{\triangleleft} e$, which in turn completes the proof of this lemma.
\end{proof}

This lemma states that trying to follow the same transitions conserves the relation $\overset{*}{\triangleleft}$. It provides a deterministic procedure to reach the extremal fixed points of PSSPM(n):

\begin{notation}
  We use the symbols $\leq_{lex}$ and $\geq_{lex}$ to denote the lexicographic order over configurations. Note that $a \overset{*}{\triangleleft} b \Rightarrow a \leq_{lex} b$ and $a \triangleleft b \Rightarrow a <_{lex} b$.
\end{notation}

\begin{corollary}
  The maximal --- leftmost --- (resp. minimal --- rightmost ---) fixed point of PSSPM($n$) according to the lexicographic order is reached when one chooses at every step the $\mathcal L$ rule (resp. $\mathcal R$ rule).
\end{corollary}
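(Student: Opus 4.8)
The plan is to reduce the corollary to Proposition~\ref{lemma:main}, after two bookkeeping observations: (i) the PSSPM($n$) dynamics always terminates, so that ``choose $\mathcal L$ at every step'' denotes a well-defined configuration, and (ii) every fixed point of PSSPM($n$) is the image of $(\underline{n})$ under some word over $\{\mathcal L,\mathcal R\}$. For (i) I would note that PSSPM($n$) is finite (its elements are unimodal configurations of total mass $n$) and that $E(c)=\sum_i c_i^2$ strictly decreases at every non‑trivial step: in one parallel step each column gives at most one grain and receives at most one grain, always ``downhill'' (the source column being at least $2$ higher than the target), so a short computation gives $E(c')-E(c)\le -2m<0$ where $m\ge 1$ is the number of grains that moved. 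Hence there is $N$ with $p_{\mathcal L}:=\mathcal L^N((\underline{n}))$ a fixed point, and $\mathcal L^m((\underline{n}))=p_{\mathcal L}$ for every $m\ge N$ since $\mathcal L$ acts as the identity on a fixed point (case~3 of the definition of $\mathcal L(\cdot)$). For (ii), a path $(\underline{n})=a^0\to a^1\to\dots\to a^k=p$ is recorded by the word $\omega=\omega_1\cdots\omega_k$ with $\omega_{t+1}=\mathcal L$ if $a^{t+1}=\mathcal L(a^t)$ and $\omega_{t+1}=\mathcal R$ otherwise (when there is no choice at $a^t$ both letters give the same successor, so either works); then $\omega((\underline{n}))=p$, and appending $\mathcal L$'s keeps the value equal to $p$ because $p$ is a fixed point.

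The core step is the following claim, proved by induction on $k=|\omega|$: \emph{for every $a\in$ PSSPM($n$) and every $\omega\in\{\mathcal L,\mathcal R\}^k$ one has $\omega(a)\leq_{lex}\mathcal L^k(a)$.} The base case $k=0$ is the equality $a=a$. For the inductive step write $\omega=\omega_1\tau$ with $|\tau|=k$. If $\omega_1=\mathcal L$, then $\omega(a)=\tau(\mathcal L(a))\leq_{lex}\mathcal L^k(\mathcal L(a))=\mathcal L^{k+1}(a)$ by the induction hypothesis applied to $\mathcal L(a)\in$ PSSPM($n$). If $\omega_1=\mathcal R$, then $\omega(a)=\tau(\mathcal R(a))$; Proposition~\ref{lemma:main} applied to $a$ with the word $\tau$ gives $\tau(\mathcal R(a))\overset{*}{\triangleleft}\tau(\mathcal L(a))$, hence $\tau(\mathcal R(a))\leq_{lex}\tau(\mathcal L(a))$, and the induction hypothesis gives $\tau(\mathcal L(a))\leq_{lex}\mathcal L^{k+1}(a)$; transitivity of $\leq_{lex}$ concludes. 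I would stress that the claim is deliberately stated with $\leq_{lex}$ rather than with $\overset{*}{\triangleleft}$: since $\overset{*}{\triangleleft}$ is \emph{not} transitive, chaining the two comparisons is only legitimate after passing to the genuinely transitive lexicographic order.

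The corollary then follows at once. Given any fixed point $p$ of PSSPM($n$), pick by (ii) a word $\sigma$ with $\sigma((\underline{n}))=p$ and $|\sigma|\ge N$; the claim applied to $a=(\underline{n})$ yields $p=\sigma((\underline{n}))\leq_{lex}\mathcal L^{|\sigma|}((\underline{n}))=p_{\mathcal L}$. Thus $p_{\mathcal L}$ is $\geq_{lex}$ every fixed point of PSSPM($n$), i.e.\ it is the lexicographically maximal one; and, using $a\triangleleft b\Rightarrow a<_{lex}b$ together with the (common) shape of fixed points, the lexicographically greatest fixed point is exactly the one whose mass sits furthest to the left, that is the leftmost fixed point. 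The rightmost statement is obtained by the mirror argument: interchanging the roles of $\mathcal L$ and $\mathcal R$ (and of Proposition~\ref{lemma:main}, which is itself symmetric) one proves $\mathcal R^k(a)\leq_{lex}\omega(a)$ for every $\omega$ of length $k$, whence $\mathcal R^N((\underline{n}))$ is the lexicographically minimal --- rightmost --- fixed point.

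I expect no hidden difficulty in the induction itself, which is purely formal; the only genuine points to handle carefully are the two preliminary facts, namely that finitely many $\mathcal L$-steps reach a fixed point and that every PSSPM($n$) fixed point is $\omega((\underline{n}))$ for a suitable $\omega$. Once these are in place, Proposition~\ref{lemma:main} does all the work.
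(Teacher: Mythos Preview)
Your proof is correct and follows the same approach as the paper: an induction on the length of the word, invoking Proposition~\ref{lemma:main} at each step and passing to $\leq_{lex}$ via $a\overset{*}{\triangleleft}b\Rightarrow a\leq_{lex}b$. The paper's proof is a single sentence that simply asserts this induction; you have spelled it out in full, including the termination argument and the representation of fixed points as $\omega((\underline{n}))$, and you have rightly flagged that $\overset{*}{\triangleleft}$ is not transitive so one must switch to $\leq_{lex}$ before chaining inequalities---a point the paper leaves implicit.
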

\begin{proof}
  By induction on Proposition \ref{lemma:main} and since $a \overset{*}{\triangleleft} b \Rightarrow a\leq_{lex} b$, we have for all $k \in \mathbb N$ and all $w \in \{ \mathcal L, \mathcal R \}^k$ that $\mathcal L^k((\underline{n})) \leq w((\underline{n}))$.
\end{proof}

We will now see how the relation $\overset{*}{\triangleleft}$, used strictly, allows one to browse exhaustively the set of reachable fixed points of SSPM($n$) and PSSPM($n$).

\begin{proposition}\label{lemma:psspm}
For all fixed $a$ of PSSPM($n$) except its leftmost (maximal according to $\leq_{lex}$), there exists a unique fixed point $b$ of PSSPM($n$) such that $a \triangleleft b$.
\end{proposition}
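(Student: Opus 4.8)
The plan is to build the successor $b$ from any path witnessing $a\in$ PSSPM($n$), to extract $a\mathrel{\overset{*}{\triangleleft}}b$ from Proposition \ref{lemma:main}, and then to upgrade $\overset{*}{\triangleleft}$ to $\triangleleft$ and establish uniqueness using the known shape of fixed points (the description of \cite{fppt}, which is the same for SSPM and PSSPM).

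\textbf{Existence.} Fix a path $(\underline n)=c^0\to c^1\to\dots\to c^m=a$. For a configuration $c$ write $\widehat c=\mathcal L^N(c)$ for $N$ large, the fixed point obtained from $c$ by always choosing $\mathcal L$; by the previous corollary $\widehat{(\underline n)}$ is the leftmost fixed point $L$, and $\widehat a=a$ since $a$ is a fixed point. If a step $c^s\to c^{s+1}$ is forced or is a choice resolved as $\mathcal L$, then $c^{s+1}=\mathcal L(c^s)$, hence $\widehat{c^{s+1}}=\widehat{c^s}$. Since $a\ne L$, we have $\widehat{c^0}\ne\widehat{c^m}$, so $t:=\max\{\,s:\widehat{c^s}\ne a\,\}$ is well defined and $t<m$. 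The step $c^t\to c^{t+1}$ is neither forced nor resolved as $\mathcal L$ (either would give $\widehat{c^{t+1}}=\widehat{c^t}\ne a$), so it is a genuine choice resolved as $\mathcal R$: $c^{t+1}=\mathcal R(c^t)$, whence $a=\widehat{\mathcal R(c^t)}$. Set $b:=\widehat{\mathcal L(c^t)}=\widehat{c^t}$; it is a fixed point of PSSPM($n$) and $b\ne a$ by the choice of $t$. Applying Proposition \ref{lemma:main} to $c^t$ with $\omega=\mathcal L^N$ ($N$ large) gives $\mathcal L^N(\mathcal R(c^t))\mathrel{\overset{*}{\triangleleft}}\mathcal L^N(\mathcal L(c^t))$, that is $a\mathrel{\overset{*}{\triangleleft}}b$. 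Since $a\ne b$, $\Delta(a,b)$ is a nonempty concatenation of blocks $0^*{-}10^*10^*$.

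\textbf{From $\overset{*}{\triangleleft}$ to $\triangleleft$, and uniqueness.} What remains is purely about fixed points, and I would isolate it as a lemma: \emph{if $f,g$ are fixed points of PSSPM($n$) with $f\mathrel{\overset{*}{\triangleleft}}g$ and $f\ne g$ then $f\triangleleft g$, and for a given fixed point $f$ there is at most one fixed point $g$ with $f\triangleleft g$.} Both parts rest on the explicit form of a fixed point --- a unimodal configuration with all consecutive differences of absolute value at most $1$, the positions of the flat steps being tightly constrained. For the first part I would argue by contradiction: if $\Delta(f,g)$ carried two consecutive blocks, with positions $i_1<j_1<i_2<j_2$ holding $-1,1,-1,1$, then the requirement that both $f$ and $g$ be fixed points forces a flat step of $f$ at $i_1$ and a slope-$1$ step at $j_1$ (and mirror conditions for $g$), which is incompatible with the unimodality of $g$ on the segment $[i_1,j_1]$. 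Applying this to $a\mathrel{\overset{*}{\triangleleft}}b$ yields $a\triangleleft b$. For uniqueness, $a\triangleleft g$ means $g$ is $a$ with a single grain moved from some column $j$ to some column $i<j$; since $a$ is a fixed point, I would show that the only column that can be raised by $1$ while keeping the (reachable) fixed-point shape is the topmost flat step on the left of the peak of $a$, and symmetrically the only column that can be lowered by $1$ is determined the same way, so the pair $(i,j)$ is forced and $g=b$.

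\textbf{Main obstacle.} The path argument together with Proposition \ref{lemma:main} delivers \emph{some} nearby fixed point almost for free; the genuine work is the fixed-point lemma. The delicate point is that Proposition \ref{lemma:main} only propagates the weak relation $\overset{*}{\triangleleft}$, so several blocks could a priori accumulate along the evolution --- ruling this out \emph{at the fixed point}, and pinning the target down to a single $b$, is exactly where one must invoke the combinatorial description of the fixed points of SSPM($n$)/PSSPM($n$) from \cite{fppt}, used in conjunction with the monotonicity $x\mathrel{\overset{*}{\triangleleft}}y\Rightarrow x\le_{lex}y$.
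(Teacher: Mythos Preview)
Your approach matches the paper's: obtain a nearby fixed point $b$ with $a\mathrel{\overset{*}{\triangleleft}}b$ by flipping one $\mathcal R$ to $\mathcal L$ in a word reaching $a$ and invoking Proposition~\ref{lemma:main}, then use the structure of fixed points to upgrade to $a\triangleleft b$ and get uniqueness. Your existence paragraph is in fact tighter than the paper's (you pin down exactly which step to flip and why $b\ne a$), while for the second part the paper does precisely what you outline but makes it concrete via the plateau description: a fixed point has at most three plateaus $(x,x\!+\!1),(y,y\!+\!1),(z,z\!+\!1)$, a grain can only be added at $x\!+\!1$ or $z$ and removed at $x,y,y\!+\!1$ or $z\!+\!1$, and the ordering constraint in $\overset{*}{\triangleleft}$ together with the fixed-point condition then forces a single pair $(i,j)$ --- this is the argument you should spell out in place of your unimodality-contradiction sketch.
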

\begin{proof} 
  There exists a word $u$ such that $u((\underline{n}))=a$ and from Proposition \ref{lemma:main}, since $a$ is not the greatest fixed point of PSSPM($n$), by incrementally changing letters $\mathcal R$ into $\mathcal L$ in $u$ until reaching a configuration different from $a$, we will eventually find a configuration $b$ such that $a \overset{*}{\triangleleft} b$ and $a \neq b$.
  
  Let us now prove that $a \triangleleft b$ and that there is no other fixed point $c$ such that $a \triangleleft c$. By a result from \cite{sspm} and \cite{unimodal} a fixed point of SSPM (hence of PSSPM) can be cut into two parts which are fixed points of SPM. By a result from \cite{spm} a fixed point of SPM is a stair (each difference of height is 1) with at most one plateau (two consecutive columns with the same number of grain). As a consequence of those two results, there are at most three plateaus in $a$ (there may be one on the top, which we cut) at positions $(x,x+1)$ for the left plateau, $(y,y+1)$ for the top plateau and $(z,z+1)$ for the right plateau (see figure \ref{fig:plateaus}). We have $a \overset{*}{\triangleleft} b$ and $a \neq b$ so there exists at least one couple of positions $(i,j)$, with $i<j$, such that $a_i = b_i - 1$ and $a_j=b_j + 1$. Let us now see that we can't have more than one such couple of positions, which will prove that $\Delta(a,b) \in 0^*-\!\!10^*10^*$. There are 4 positions where we can remove a grain and still respect the plateaus requirement to be a PSSPM fixed point on: $x$, $y$, $y+1$ and $z+1$ (if there is no top plateau, we can still remove the top grain), and there are 2 positions where we can add a grain: $x+1$ and $z$. But if we add a grain at $z$, we have to remove a grain at a position greater than $z$ (recall that $a \overset{*}{\triangleleft} b$). The only possible candidate position is $z+1$, leading to a configuration which is not a fixed point since the difference of height between $z$ and $z+1$ becomes greater than 2. Therefore we can only add a grain at $x+1$. Now, where can we remove a grain: only on $z+1$ if there is a plateau at $(z,z+1)$ (otherwise there are two plateaus on the left or right side which is not a SPM fixed point), and only on the rightmost top column if there is no right plateau. This proves that $a \triangleleft b$ and $b$ is unique.
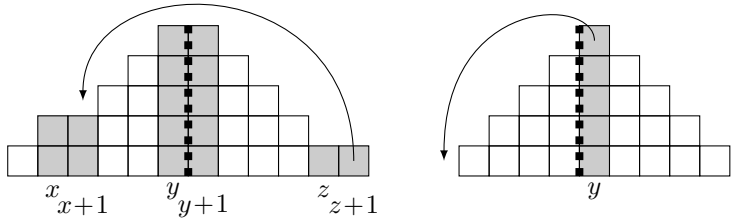
\begin{figure}[!h]
  \begin{center}
    \begin{tikzpicture}[scale=.4]
  \fill[black!20] (1,1) rectangle ++ (2,2);
  \fill[black!20] (5,1) rectangle ++ (2,5);
  \fill[black!20] (10,1) rectangle ++ (2,1);
  \foreach \x/\h in {0/1,1/2,2/2,3/3,4/4,5/5,6/5,7/4,8/3,9/2,10/1,11/1}
    \foreach \y in {1,...,\h}
      \draw(\x,\y) rectangle ++ (1,1);
  \node at (1.5,.5) {$x$};
  \node at (2.5,0) {$x\!+\!1$};
  \node at (5.5,.5) {$y$};
  \node at (6.5,0) {$y\!+\!1$};
  \node at (10.5,.5) {$z$};
  \node at (11.5,0) {$z\!+\!1$};
  \draw[line width=3pt,dashed] (6,1) -- ++ (0,5);
  \draw[-latex] (11.5,1.5) .. controls (11.5,8) and (2.5,8) .. (2.5,3.5);
  \fill[black!20] (15+4,1) rectangle ++ (1,5);
  \foreach \x/\h in {0/1,1/2,2/3,3/4,4/5,5/4,6/3,7/2,8/1}
    \foreach \y in {1,...,\h}
      \draw(15+\x,\y) rectangle ++ (1,1);
  \node at (15+4.5,.5) {$y$};
  \draw[line width=3pt,dashed] (15+4,1) -- ++ (0,5);
  \draw[-latex] (15+4.5,5.5) .. controls (15+4.5,7) and (15-.5,7) .. (15-.5,1.5);
\end{tikzpicture}
  \end{center}
  \caption{For any non-maximal fixed point $a$, there exists a unique fixed point $b$ such that $a \triangleleft b$.}
  \label{fig:plateaus}
\end{figure} 

\vspace{-15pt}
$~$
\end{proof}

\begin{theorem}\label{theorem:continuous}
Let
$$\pi_0 <_{lex} \pi_1 <_{lex} \dots <_{lex} \pi_{k-1} <_{lex} \pi_k$$
be the sequence of all fixed points of PSSPM($n$) ordered lexicographically. Then this sequence has the following strong relation:
$$\pi_0 \triangleleft \pi_1 \triangleleft \dots \triangleleft \pi_{k-1} \triangleleft \pi_k$$
Moreover, for any fixed point $\pi$ of SSPM($n$) such that $\pi_0 \leq_{lex} \pi \leq_{lex} \pi_k$, there exists an index $i$, $0 \leq i \leq k$, such that $\pi_i=\pi$.
\end{theorem}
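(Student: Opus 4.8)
The strategy is to organise the fixed points of PSSPM($n$) into a single $\triangleleft$-chain produced by iterating Proposition \ref{lemma:psspm}, and then to use the known structural form of fixed points to show both that this chain is exhaustive and that it already contains every fixed point of SSPM($n$) lying between its two endpoints.

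For the chain, recall that by the corollary to Proposition \ref{lemma:main} the rightmost fixed point $\pi_0$ lies in PSSPM($n$) and is $\leq_{lex}$-minimal there. Starting from $\pi_0$ and applying Proposition \ref{lemma:psspm} repeatedly gives $\pi_0=b_0\triangleleft b_1\triangleleft b_2\triangleleft\cdots$, where $b_{t+1}$ denotes the unique $\triangleleft$-successor of $b_t$; as $\triangleleft$ implies $<_{lex}$ and PSSPM($n$) is finite this chain is finite, and its last term $b_m$ has no $\triangleleft$-successor, so $b_m=\pi_k$ by Proposition \ref{lemma:psspm}. To see that $\{b_0,\dots,b_m\}$ is all of PSSPM($n$) I would also use the statement symmetric to Proposition \ref{lemma:psspm} --- every non-minimal fixed point of PSSPM($n$) has a $\triangleleft$-predecessor --- proved by mirroring its proof (equivalently, by reflecting configurations left--right, which reverses $\leq_{lex}$ and exchanges the two ends of the $\triangleleft$ pattern). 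Then for any $g\in$ PSSPM($n$) one descends from $g$ along $\triangleleft$-predecessors; the descent is $<_{lex}$-strictly decreasing, hence finite, and can only stop at the unique predecessor-free fixed point $\pi_0=b_0$; re-ascending and using at each step the uniqueness of the $\triangleleft$-successor furnished by Proposition \ref{lemma:psspm}, one identifies this ascent with an initial segment of $b_0\triangleleft b_1\triangleleft\cdots$, whence $g=b_t$ for some $t$. So $\{b_0,\dots,b_m\}$ is the full lexicographically ordered list of fixed points of PSSPM($n$), giving $b_t=\pi_t$ and the chain $\pi_0\triangleleft\pi_1\triangleleft\cdots\triangleleft\pi_k$.

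For the inclusion of SSPM($n$) fixed points I would argue by contradiction: let $\pi$ be a fixed point of SSPM($n$) with $\pi_0\leq_{lex}\pi\leq_{lex}\pi_k$ but $\pi\neq\pi_i$ for all $i$, so $\pi_t<_{lex}\pi<_{lex}\pi_{t+1}$ for a unique $t$. Here one invokes the structural description already used in the proof of Proposition \ref{lemma:psspm}: by \cite{fppt} the fixed points of SSPM($n$) and PSSPM($n$) have the same shape, and by \cite{sspm,unimodal,spm} each such configuration is unimodal and splits at its peak into two unique SPM fixed points, so all its height differences equal $1$ except for at most one plateau on the left, at most one on top and at most one on the right. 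Moreover the proof of Proposition \ref{lemma:psspm} exhibits $\pi_{t+1}$ as $\pi_t$ with a single grain moved leftwards, onto the column $\xi$ just right of the leftmost plateau of $\pi_t$ (or onto a fresh leftmost column if $\pi_t$ has none); thus $\pi_t$ and $\pi_{t+1}$ agree strictly left of $\xi$ and $(\pi_{t+1})_\xi=(\pi_t)_\xi+1$. Let $q$ be the first column where $\pi$ differs from $\pi_t$. If $q<\xi$ then $\pi_q>(\pi_t)_q=(\pi_{t+1})_q$ forces $\pi>_{lex}\pi_{t+1}$; if $q=\xi$ the shape constraint at $\xi$ forces $\pi_\xi=(\pi_{t+1})_\xi$, i.e.\ $\pi$ agrees with $\pi_{t+1}$ through $\xi$; and if $q>\xi$ then $\pi$ agrees with $\pi_t$ throughout its ascending part (a strictly increasing step of $\pi_t$ cannot be overtaken without creating a forbidden difference of $2$, and a second ascending plateau is not allowed), hence past the peak, after which the uniqueness of the one-sided SPM fixed point of a given mass together with $\sum_i\pi_i=n$ forces $\pi=\pi_t$. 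In the remaining case, repeating this analysis with $\pi_{t+1}$ in place of $\pi_t$ shows that $\pi$, having agreed with $\pi_{t+1}$ on a prefix reaching $\xi$, cannot drop strictly below $\pi_{t+1}$ further right without leaving the admissible shape, so $\pi\geq_{lex}\pi_{t+1}$. Every case contradicts $\pi_t<_{lex}\pi<_{lex}\pi_{t+1}$, hence $\pi=\pi_i$ for some $i$.

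I expect the main obstacle to lie precisely in this last case distinction: ruling out, via the rigidity of the fixed-point shape, an SSPM fixed point that agrees with $\pi_t$ (or with $\pi_{t+1}$) on its entire ascending part and peak and then redistributes the remaining mass on its right side. What closes this is the interplay between conservation of the total mass $n$ and the uniqueness of the one-sided SPM fixed point of a prescribed mass, but making it airtight requires a careful, though finite, enumeration of where the top and right plateaus may sit --- and, in the same breath, being exact about the column onto which the grain is moved in the construction underlying Proposition \ref{lemma:psspm}.
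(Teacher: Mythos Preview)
Your argument is correct in outline but differs from the paper's in both halves, and in the second half you work harder than necessary.

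\textbf{First part.} The paper does not invoke a predecessor version of Proposition~\ref{lemma:psspm}. Instead it observes that $n^2$ steps suffice to reach a fixed point, so every PSSPM($n$) fixed point is $\omega((\underline{n}))$ for some $\omega\in\{\mathcal L,\mathcal R\}^{n^2}$. One then walks from $\mathcal R^{n^2}$ to $\mathcal L^{n^2}$ flipping one letter at a time; Proposition~\ref{lemma:main} gives $s^j((\underline{n}))\overset{*}{\triangleleft}s^{j+1}((\underline{n}))$ at each step, and the proof of Proposition~\ref{lemma:psspm} (which actually shows that any two distinct fixed points related by $\overset{*}{\triangleleft}$ are already related by $\triangleleft$, with the larger one uniquely determined) turns this into a $\triangleleft$-chain from $\pi_0$ to $\pi_k$. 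Since the flip sequence can be chosen to pass through any given word, and successors are unique, every fixed point lies on this chain. Your approach via a mirrored ``predecessor'' proposition is perfectly valid and perhaps cleaner combinatorially, but it requires stating and proving that symmetric statement, whereas the paper's route uses only what has already been established.

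\textbf{Second part.} Here the paper avoids your case analysis entirely. The point is that the uniqueness argument inside the proof of Proposition~\ref{lemma:psspm} uses only the \emph{shape} of fixed points (unimodal, at most three plateaus), which by \cite{sspm,unimodal} is exactly the characterisation of SSPM($n$) fixed points; in fact those references build the set of SSPM($n$) fixed points by the very same ``move one grain from the right/top plateau past the left plateau'' construction. Hence the $\triangleleft$-chain of SSPM($n$) fixed points and the $\triangleleft$-chain of PSSPM($n$) fixed points coincide on the segment from $\pi_0$ to $\pi_k$, and PSSPM($n$)~$\subseteq$~SSPM($n$) finishes it. Your lexicographic case split (on the first column $q$ where $\pi$ leaves $\pi_t$) can be made to work, but the sub-case $q>\xi$ is exactly where you yourself flag difficulties, and appealing to ``uniqueness of the one-sided SPM fixed point of a given mass'' does not settle it directly, since the top plateau may sit in different positions and the right side need not carry the same mass in $\pi$ and $\pi_t$. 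Recasting the problem as ``both sets form a single $\triangleleft$-chain under the same successor map'' removes the need for any such enumeration.
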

\begin{proof}
  $n^2$ is an upper bound to the number of iterations from the configuration $(\underline{n})$ to a fixed point using PSSPM rules (at each step a grain loses some height). Therefore, the set of fixed points of PSSPM($n$) is equal to $\bigcup \limits_{\omega \in \{\mathcal L,\mathcal R\}^{n^2}} \omega((\underline{n}))$ because trying every possibility leads to reaching every possible fixed point.\\
  Starting from the word $s^0=\mathcal R^{n^2}$ and changing one by one the letters $\mathcal R$ into $\mathcal L$, we get a sequence of words $(s^0,s^1,\dots,s^{n^2})$ such that for all $k$, the size of the word $s^k$ is $n^2$ and the number of occurrences of $\mathcal L$ in $s^k$ is $k$. From Proposition \ref{lemma:main}, for all $k<n^2$ we have $s^k ((\underline{n})) \overset{*}{\triangleleft} s^{k+1}((\underline{n}))$. There are two possibilities:
  \begin{itemize}
    \item $s^k((\underline{n}))=s^{k+1}((\underline{n}))$.
    \item $s^k((\underline{n})) \neq s^{k+1}((\underline{n}))$.
  \end{itemize}
  In the second case, both configurations are fixed points of PSSPM($n$) and from Proposition \ref{lemma:psspm} we have $s^k \triangleleft s^{k+1}$. This gives a simple procedure to construct the set of fixed points of PSSPM($n$) from $\pi_0$ to $\pi_k$ and proves the first part of the theorem (the procedure is described below).

  From the SSPM($n$) fixed point characterization described in \cite{sspm} and \cite{unimodal} (presented in the proof of Proposition \ref{lemma:psspm}), even if the complete set of reachable fixed points are not the same, the fixed points of SSPM($n$) and PSSPM($n$) between the smallest and greatest fixed points of PSSPM($n$) are the same (the authors of \cite{sspm} and \cite{unimodal} use exactly the same construction as the one described in the proof of Proposition \ref{lemma:psspm}, see figure \ref{fig:plateaus}). The fact that PSSPM($n$) $\subseteq$ SSPM($n$) completes the proof of the second part of the theorem.
\end{proof}

The proofs of Proposition \ref{lemma:psspm} and Theorem \ref{theorem:continuous} provide a simple algorithm to browse the set of fixed points of PSSPM($n$). First compute the minimal (rightmost $\pi^{\mathcal R}$) and maximal (leftmost $\pi^{\mathcal L}$) fixed points starting from $(\underline{n})$ by following always the same choice ($\mathcal R$ to get the minimal configuration, and $\mathcal L$ to get the maximal one). Then starting from $\pi^{\mathcal R}$, construct the unique fixed point $\pi_1$ such that $\pi^{\mathcal R} \triangleleft \pi_1$, as explained on figure \ref{fig:plateaus}. From $\pi_1$, construct the unique fixed point $\pi_2$ such that $\pi_1 \triangleleft \pi_2$, etc... Until you get $\pi^{\mathcal L}$. From what precedes, this deterministic procedure browses exhaustively the set of fixed points of PSSPM($n$).


\section{Conclusion}

We have studied the set of fixed points of PSSPM($n$) and compared it to the set of fixed points of SSPM($n$) using the natural lexicographic order. We proved the intuitive fact that the greatest fixed point can be reached using always the choice $\mathcal L$, and that the smallest fixed point can be reached using always the choice $\mathcal R$. More interestingly, we showed that every fixed point reachable in SSPM($n$) between the lowest and the greatest fixed points of PSSPM($n$) is also reachable in PSSPM($n$). This is a kind of continuity property: the set of fixed points reachable in PSSPM($n$) is an "interval" of the set of fixed points reachable in SSPM($n$).

Further work may concentrate on finding a bound on the maximal and minimal non-empty columns in the set of fixed points of PSSPM($n$) which is an open question. The bound $\lfloor \sqrt{2n} \rfloor$ proved in \cite{unimodal} holds for PSSPM($n$) but it is not satisfying since proposition \ref{prop:psspm} states that there are strictly less fixed points in PSSPM($n$) than in SSPM($n$).\\

\subsection*{Acknowledgements}
The authors would like to thank Eric R\'emila for useful comments.

\bibliographystyle{alpha}
\bibliography{biblio}

\end{document}